\documentclass[aps,prl,reprint,superscriptaddress,nofootinbib,longbibliography]{revtex4-2}
\usepackage{amsmath,amssymb,amsthm,mathtools}
\usepackage{bm}
\usepackage{booktabs}
\usepackage{graphicx}
\usepackage{microtype}
\usepackage[hidelinks]{hyperref}

\newtheorem{theorem}{Theorem}
\newtheorem{corollary}[theorem]{Corollary}
\newtheorem{lemma}[theorem]{Lemma}
\newcounter{algorithm}

\newcommand{\Tr}{\operatorname{Tr}}
\newcommand{\cE}{\mathcal E}
\newcommand{\cH}{\mathcal H}
\newcommand{\Iacc}{I_{\mathrm{acc}}}
\newcommand{\Uaff}{U_{\mathrm{aff}}}
\newcommand{\supp}{\operatorname{supp}}
\newcommand{\norm}[1]{\lVert#1\rVert}

\providecommand{\SM}[1]{Appendix~\ref{#1}}

\hypersetup{
  pdftitle={A Proof of Shor's Orthogonal-Measurement Conjecture and the Structure of Information-Optimal Quantum Measurements},
  pdfauthor={Jinbo Wang, Qihang Wang, and Kun Chen},
  pdfkeywords={accessible information, posterior algebra, certified quantum measurement, receiver design, Shor conjecture}
}

\begin{document}
\title{A Proof of Shor's Orthogonal-Measurement Conjecture and the Structure of Information-Optimal Quantum Measurements}

\author{Jinbo Wang}
\thanks{These authors contributed equally to this work.}
\affiliation{School of Mathematical Sciences, Peking University, Beijing 100871, China}
\author{Qihang Wang}
\thanks{These authors contributed equally to this work.}
\affiliation{School of Mathematical Sciences, Peking University, Beijing 100871, China}
\author{Kun Chen}
\email{Contact author: chenkun@itp.ac.cn}
\affiliation{Institute of Theoretical Physics, Chinese Academy of Sciences, Beijing 100190, China}
\date{August 11, 2026}

\begin{abstract}
Which quantum measurement extracts the most classical information from an
ensemble?  We introduce the posterior algebra, a new canonical operator algebra
selected by mutual information.  For faithful ensembles, an affine information
bound is exact precisely when this algebra is commutative; its joint spectral
measurement is then optimal, and every optimal finite POVM refines it.  Binary
ensembles have one generator; compactness covers singular states, giving a
proof of Shor's finite-dimensional binary orthogonal-measurement conjecture.
The framework also gives rigidity bounds and a certified posterior-spectral
receiver, validated on 408 mixed-state instances.
\end{abstract}

\maketitle

\begin{figure*}[t]
 \centering
 \includegraphics[width=0.98\textwidth]{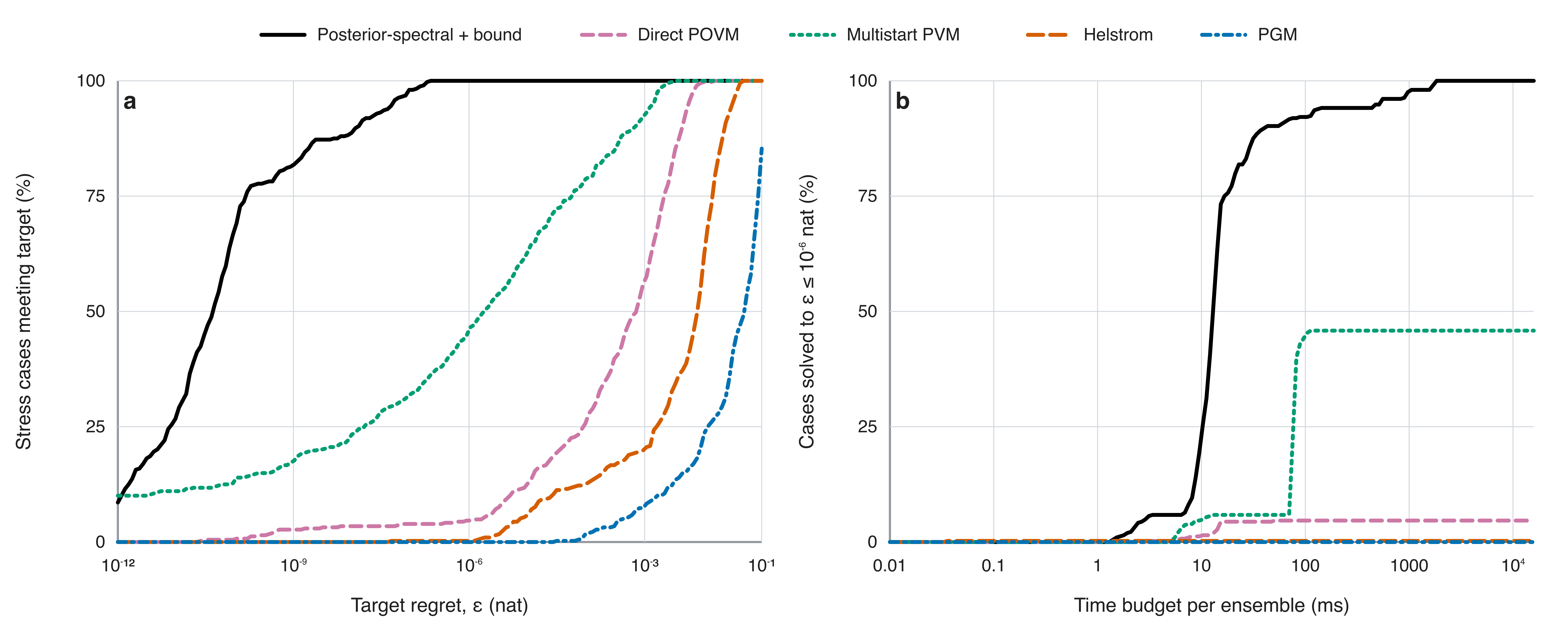}
 \caption{Performance of the posterior-spectral receiver on 408 faithful binary
 mixed-state stress instances.  (a) Fraction of instances whose information
 regret is below a target $\epsilon$.  (b) Fraction solved to
 $\epsilon\leq10^{-6}$ nat within a given time budget.  Higher and further left
 are better.  The posterior-spectral curve uses its certified affine
 upper-minus-PVM gap.  The other curves use empirical regret relative to the
 posterior-spectral PVM.}
 \label{fig:v3-solver-profile}
\end{figure*}

Quantum measurement design depends on the operational objective.  The
Helstrom measurement minimizes the probability of a wrong hard decision
\cite{helstrom1976quantum}.  A
receiver for communication, soft decoding, or statistical inference should
instead maximize the mutual information between the signal label and the
measurement outcome.  These objectives are different.  For more than two
signals, generalized measurements can strictly outperform every projective
measurement \cite{peres1991optimal,shor2000number,sasakietal1999accessible}.  Even for two mixed
states, the minimum-error and maximum-information bases can differ.

Early binary results include ensemble-dependent information bounds and
Levitin's proof for two pure states \cite{fuchs1994ensemble,levitin1995optimal}.
Shor later refuted Levitin's broader conjecture while reporting Fuchs and
Peres's numerical studies, in which the optimal binary measurement was always
orthogonal \cite{shor2000number}.  This surviving statement became known
as Shor's orthogonal-measurement conjecture; Keil proved the qubit case
\cite{keil2008proof}.  As late as December 2025, Thai and Dall'Arno still
described the general statement as an open conjecture; they disproved a
proposed monotonicity route and characterized state-dependent extremality for
qubit dichotomies \cite{thai2026shor}.  However, we also observe that the
projective-reduction theorem of Fang, Fawzi, and Fawzi, first posted in February
2025, already implies the binary---and, more generally, collinear---existence
statement after the specialization verified in \SM{app:v3-binary}
\cite{fang2026uhlmann}.  We therefore claim no priority for projective
sufficiency itself.  Our route instead exposes the
optimizer-selected algebra, classifies every optimal finite POVM in the exact
regime, and supplies quantitative and algorithmic consequences.

Let $\cE=\{(p_i,\rho_i)\}_{i=1}^m$ be a finite ensemble with $p_i>0$
(zero-prior signals having been removed), and let
$\tau=\sum_i p_i\rho_i$ be its average state.  A POVM $M=\{M_y\}_y$ produces
$q(y|i)=\Tr\rho_iM_y$ and $q(y)=\Tr\tau M_y$.  The accessible information is
\begin{align}
 I(\cE,M)&=\sum_{i,y}p_iq(y|i)\log\frac{q(y|i)}{q(y)},\nonumber\\
 \Iacc(\cE)&=\sup_M I(\cE,M).
 \label{eq:v3-iacc}
\end{align}
The optimization is nonlinear and ranges over all generalized measurements.
Davies' theorem bounds the required number of effects by $d^2$, but it does not
identify the optimizer \cite{davies1978information}.  The basic questions
therefore remain structural.  When does a projective measurement suffice?  Is
there a canonical optimum?  How are all optimal measurements related?  Can an
answer also lead to a certified algorithm?

\textit{Our results.---}
For faithful finite ensembles, we answer these questions by associating an
operator algebra with the information task itself.  We first optimize an
ensemble-adapted affine upper bound $\Uaff$.
Its unique optimizer defines posterior operators $G_i^\star$ and the
\emph{posterior algebra}
\begin{equation}
 \mathcal A_\star=C^*(G_1^\star,\ldots,G_m^\star).
 \label{eq:v3-posterior-algebra}
\end{equation}
Our central result is
\begin{equation}
 \boxed{\quad \Iacc=\Uaff
 \quad\Longleftrightarrow\quad
 \mathcal A_\star\ \text{is abelian}.\quad}
 \label{eq:v3-exactness}
\end{equation}
In the exact regime, the joint spectral PVM of $\mathcal A_\star$ is canonical.
Every optimal finite POVM is a label-independent refinement of this PVM.  We
also prove a quantitative form of the statement.  If $\epsilon_M=\Uaff-
I(\cE,M)$, then an optimized block-intertwining residual obeys
\begin{equation}
 R_\star(M)\leq4\epsilon_M.
 \label{eq:v3-robust-preview}
\end{equation}
In the abelian exact regime, near-optimal measurements must therefore nearly
respect the same posterior blocks.
For a binary ensemble the affine dimension is one, so the posterior algebra
has one generator and is automatically abelian.

Equation~\eqref{eq:v3-exactness} then proves Shor's orthogonal-measurement
conjecture.  The same reduction yields
a certified posterior-spectral receiver.  It constructs a projective basis
and bounds how much additional mutual information any other measurement could
gain.  Figure~\ref{fig:v3-solver-profile} compares its finite-budget accuracy
and runtime profiles with standard receiver-design methods.

\textit{Posterior algebra.---}
Assume first that every signal state is faithful.  Choose minimal real affine
coordinates
\begin{equation}
 \rho_i=\tau+\sum_{a=1}^r s_{ia}A_a,
 \qquad \sum_i p_is_{ia}=0.
 \label{eq:v3-affine}
\end{equation}
For a Hermitian tuple $\bm H=(H_1,\ldots,H_r)$, define
\begin{align}
 G_i(\bm H)&=I+\sum_a s_{ia}H_a\succ0,\nonumber\\
 \Phi(\bm H)&=\sum_i p_i\Tr\rho_i\log G_i(\bm H),\nonumber\\
 \Uaff&=\max_{\bm H}\Phi(\bm H).
 \label{eq:v3-affine-program}
\end{align}
The feasible domain is bounded, and $\Phi$ tends to $-\infty$ at its boundary.
Intrinsic strong concavity gives a unique interior maximizer $\bm H_\star$.
The resulting operators $G_i^\star=G_i(\bm H_\star)$ are quantum analogues of
posterior likelihood ratios.  A change of affine coordinates leaves their
generated algebra unchanged.

The program bounds every physical measurement.  For each nonzero POVM effect
(so $c_y>0$ by faithfulness of $\tau$), set
\begin{align}
 c_y&=\Tr\tau M_y,&
 h_{ay}&=\frac{\Tr A_aM_y}{c_y},\nonumber\\
 H_{a,M}&=\sum_yh_{ay}M_y,&
 g_{iy}&=\frac{q(y|i)}{c_y}.
 \label{eq:v3-embedding}
\end{align}
Then $G_i(\bm H_M)=\sum_yg_{iy}M_y$.  Operator Jensen gives
\begin{align}
 \log G_i(\bm H_M)&\succeq\sum_y(\log g_{iy})M_y,\nonumber\\
 I(\cE,M)&\leq\Phi(\bm H_M)\leq\Uaff.
 \label{eq:v3-jensen}
\end{align}
Hence $\Iacc\leq\Uaff\leq\chi$, where $\chi$ is the Holevo quantity
\cite{holevo1973bounds,matsumoto2014maximization,hiai2021quantum,
hansen2003jensen,berta2017variational}.  The gap
$\Uaff-\Iacc$ measures the cost of incompatible optimal posterior observables.

We now prove the exactness criterion.  Suppose first that the optimized
posteriors commute.  Let $P_\star=\{P_z\}_z$ be their minimal joint-spectral
PVM, with $G_i^\star P_z=g_{iz}P_z$.  Direct expansion gives
\begin{equation}
 I(\cE,P_\star)-\Phi(\bm H_\star)
 =\sum_z\Tr(\tau P_z)
 D(\alpha_{\cdot z}\Vert\beta_{\cdot z})\geq0,
 \label{eq:v3-gap-identity}
\end{equation}
where
$\alpha_{iz}=p_i\Tr(\rho_iP_z)/\Tr(\tau P_z)$ and
$\beta_{iz}=p_ig_{iz}$.  Since
$I(\cE,P_\star)\leq\Iacc\leq\Uaff=\Phi(\bm H_\star)$, every inequality is an
equality.  Thus $P_\star$ is optimal.

Conversely, suppose $\Iacc=\Uaff$.  An optimal finite POVM must saturate both
inequalities in Eq.~\eqref{eq:v3-jensen}.  Uniqueness gives
$\bm H_M=\bm H_\star$.  Equality in compression Jensen then implies
\begin{equation}
 g_{iy}M_y^{1/2}=M_y^{1/2}G_i^\star.
 \label{eq:v3-intertwining}
\end{equation}
Every nonzero effect is supported in a common eigenspace of all
$G_i^\star$.  Completeness therefore forces $\mathcal A_\star$ to be abelian.
The stationarity equations also give
$P_z\rho_iP_z=g_{iz}P_z\tau P_z$.  It follows that a finite POVM is optimal if
and only if its nonzero outcomes can be partitioned as
\begin{equation}
 M_y=P_zM_yP_z\quad(y\in Y_z),
 \qquad \sum_{y\in Y_z}M_y=P_z.
 \label{eq:v3-refinement}
\end{equation}
The refinement channel is independent of the signal label.  All optimal finite
POVMs are therefore Blackwell-equivalent to $P_\star$
\cite{kuramochi2015minimal,kuramochi2018experiments}; see
\SM{app:v3-exactness}.

\textit{Proof of Shor's conjecture.---}
\begin{corollary}[Shor's conjecture]
For every finite-dimensional binary ensemble, accessible information is
attained by a rank-one projective measurement.
\end{corollary}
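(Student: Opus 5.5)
The plan is to treat faithful binary ensembles first and then remove the faithfulness assumption by a compactness argument.

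\emph{Faithful case.} Let $\cE=\{(p_1,\rho_1),(p_2,\rho_2)\}$ with both $\rho_i\succ0$ and $\rho_1\neq\rho_2$; if $\rho_1=\rho_2$, every measurement gives $I=0$ and any rank-one PVM is optimal. The affine span of $\{\rho_1,\rho_2\}$ is one-dimensional, so $r=1$ in Eq.~\eqref{eq:v3-affine}. Explicitly, $\rho_i=\tau+s_iA$ with $p_1s_1+p_2s_2=0$. Then $G_i^\star=I+s_iH_\star$ are both polynomials in the single Hermitian operator $H_\star$. Hence $\mathcal A_\star=C^*(H_\star)$ is abelian. By the exactness criterion~\eqref{eq:v3-exactness}, $\Iacc=\Uaff$, and the joint spectral PVM $P_\star$ of $H_\star$ is optimal by the gap identity~\eqref{eq:v3-gap-identity}. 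This PVM may have higher-rank projectors. We refine each $P_z$ into rank-one projectors onto an orthonormal basis of its range. By Eq.~\eqref{eq:v3-refinement}, any such refinement is again optimal. Equivalently, refining a measurement never decreases mutual information (data processing), and it can never exceed $\Iacc$. So a rank-one PVM attains $\Iacc$.

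\emph{General case.} Now let $\rho_1,\rho_2$ be arbitrary density matrices on $\mathbb C^d$. Set $\rho_i^{(n)}=(1-1/n)\rho_i+I/(nd)$, which are faithful, and keep the priors fixed. By the faithful case, there are orthonormal bases $\{|e^{(n)}_k\rangle\}_{k=1}^d$ whose rank-one PVMs $Q^{(n)}$ satisfy $I(\cE^{(n)},Q^{(n)})=\Iacc(\cE^{(n)})$. The unitary group is compact, so along a subsequence $Q^{(n)}\to Q$, and $Q$ is again a rank-one PVM. Two continuity facts are needed.

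First, $(\cE,M)\mapsto I(\cE,M)$ is jointly continuous when the outcome set has at most $d$ elements. This holds because $I$ is a finite sum of terms $x\log(x/y)$ with $x\le y/p_i$, and such terms are continuous, including at $x=y=0$ with the convention $0\log 0=0$.

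Second, $\liminf_n\Iacc(\cE^{(n)})\ge\Iacc(\cE)$. By Davies' theorem, $\Iacc(\cE)$ is attained by some POVM $M$ with at most $d^2$ outcomes. By the continuity above, $I(\cE^{(n)},M)\to I(\cE,M)$, and $\Iacc(\cE^{(n)})\ge I(\cE^{(n)},M)$.

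Combining the two facts gives
\begin{equation}
I(\cE,Q)=\lim_n I(\cE^{(n)},Q^{(n)})=\lim_n\Iacc(\cE^{(n)})\ge\Iacc(\cE).
\end{equation}
Hence $Q$ is optimal for $\cE$.

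The main obstacle is the joint continuity of $I$ at the boundary, where $\Tr\tau M_y\to0$ and the logarithms are singular. I would handle it by noting that $q(y|i)\le q(y)/p_i$. Each term is then $p_iq(y|i)\log(q(y|i)/q(y))$ with a bounded ratio, so it is bounded in absolute value by a quantity that is $O(q(y)\log(1/q(y)))$ or $O(q(y))$, which tends to $0$. This gives uniform continuity of $I$ on the compact set of ensembles and POVMs with at most $d^2$ outcomes, and that uniform continuity justifies both limits above.
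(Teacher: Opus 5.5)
Your proposal is correct and follows the paper's proof. The faithful case goes through the one-generator posterior algebra, the exactness criterion, and rank-one refinement; the singular case goes through depolarization, compactness of rank-one PVMs, joint continuity of finite-alphabet mutual information, and Davies' theorem. The only differences are minor: you depolarize by $I/d$ on all of $\mathbb C^d$ instead of compressing to $\supp\tau$ and using $P_0/d_0$, which spares you the final extension by a rank-one resolution of $P_0^\perp$. You also phrase the limit as $\liminf_n\Iacc(\cE^{(n)})\ge\Iacc(\cE)$ via a Davies-optimal $M$, rather than comparing against every fixed finite POVM and invoking Davies at the end.
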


\begin{proof}
Let $\cE=\{(p,\rho_0),(q,\rho_1)\}$, with $q=1-p$.  The cases $pq=0$ or
$\rho_0=\rho_1$ are immediate, so assume otherwise.  Put
$A=\rho_0-\rho_1$.  Since
$\rho_0=\tau+qA$ and $\rho_1=\tau-pA$, the affine program has one Hermitian
variable and
\begin{equation}
 G_0^\star=I+qH_\star,\qquad G_1^\star=I-pH_\star.
 \label{eq:v3-binary-posteriors}
\end{equation}
Both are functions of $H_\star$, so they commute.  For faithful states,
Eq.~\eqref{eq:v3-exactness} therefore makes the spectral PVM of $H_\star$
optimal.  Refining a degenerate spectral block into rank-one projections cannot
decrease mutual information; since the original PVM is optimal, the refinement
is optimal as well.

For singular states let $P_0=\supp\tau$.  Positivity of $p,q$ implies
$\supp\rho_i\subseteq P_0$.  On $P_0\cH$, set
$\rho_{i,\epsilon}=(1-\epsilon)\rho_i+\epsilon P_0/d_0$, where
$d_0=\operatorname{rank}P_0$.  The faithful case supplies an ordered optimal rank-one PVM
$P^{(\epsilon)}$.  Compactness of ordered rank-one PVMs gives a subsequence
$P^{(\epsilon_k)}\to P$ as $\epsilon_k\downarrow0$.  Joint continuity of
finite-alphabet mutual information implies, for every fixed finite POVM $M$,
\begin{equation}
 I(\cE,P)=\lim_k I(\cE_{\epsilon_k},P^{(\epsilon_k)})
 \geq\lim_k I(\cE_{\epsilon_k},M)=I(\cE,M).
 \label{eq:v3-singular-limit}
\end{equation}
Davies' finite-outcome attainment theorem makes $P$ globally optimal.  A
rank-one resolution of $P_0^\perp$ extends it to the original Hilbert space
without changing any signal statistics; see \SM{app:v3-binary}.
\end{proof}

This proof also identifies the canonical measurement, classifies
every optimizer in the faithful exact regime, and yields rigidity and a
constructive algorithm.  The prior measured-$f$ route and its exact scope are
verified in \SM{app:v3-binary}.

\textit{Posterior-spectral receiver.---}
For a faithful binary ensemble, the receiver takes $p$, $\rho_0$, and $\rho_1$
as input.  It returns a rank-one projective measurement $P$ and an interval
$[L,U]$ such that
$L=I(\cE,P)\leq\Iacc\leq U$.  Thus $U-L$ answers a practical question: how
much more information could any other measurement still extract?  The method
avoids a direct search over many POVM effects or projective bases.  Instead, it
solves one strictly concave problem for a Hermitian operator $H$:
\begin{align}
 H_\star=\arg\max_H\{&p\Tr\rho_0\log(I+qH)\nonumber\\
 &+q\Tr\rho_1\log(I-pH)\},
 \label{eq:v3-binary-program}
\end{align}
subject to $I+qH\succ0$ and $I-pH\succ0$.  At the optimum, the receiver is the
spectral PVM of $H_\star$.  At any feasible iterate $\widehat H$, the spectral
PVM $P_{\widehat H}$ gives the lower bound, while the intrinsic gradient
residual $r_{\rm g}$ gives
\begin{equation}
 I(\cE,P_{\widehat H})\leq\Iacc\leq\Uaff
 \leq\Phi(\widehat H)+\tfrac12r_{\rm g}^2.
 \label{eq:v3-certificate}
\end{equation}
The upper-minus-lower gap is therefore a verifiable stopping criterion.
Standard direct-POVM and nonconvex-PVM optimizers provide only lower bounds.

\begin{center}
\begin{minipage}{0.96\columnwidth}
\refstepcounter{algorithm}\label{alg:v3-binary}
\hrule\smallskip
\textbf{Algorithm \thealgorithm\;|\;Posterior-spectral receiver}
\smallskip\hrule\smallskip
\textit{Input:} $\cE=\{(p,\rho_0),(q,\rho_1)\}$, $q=1-p$, with faithful
states; target gap $\varepsilon>0$.  \textit{Output:} a rank-one PVM $P_H$
and a certificate $0\leq\Iacc-I(\cE,P_H)\leq\varepsilon$.

\textbf{1.} Set $H\leftarrow0$ and
$C=p^2q^2[p\lambda_{\min}(\rho_0)+q\lambda_{\min}(\rho_1)]$.

\textbf{2.} Form $G_0=I+qH$ and $G_1=I-pH$.  Evaluate $\Phi(H)$ and
\[
 R(H)=pq\,[\mathcal L_{G_0}(\rho_0)-\mathcal L_{G_1}(\rho_1)],
 \qquad r_{\rm g}^2=\Tr R(H)^2/C,
\]
where $\mathcal L_G=D\log G$.

\textbf{3.} Diagonalize $H=\sum_z h_zP_z$, refine degeneracies to rank one,
and evaluate $L=I(\cE,P_H)$ and $U=\Phi(H)+r_{\rm g}^2/2$.

\textbf{4.} If $U-L\leq\varepsilon$, return $P_H$ and the certificate
$0\leq\Iacc-L\leq U-L$.

\textbf{5.} Otherwise take a feasibility-preserving ascent step for $\Phi$ and
return to Step 2.
\smallskip\hrule
\end{minipage}
\end{center}

The stress suite in Fig.~\ref{fig:v3-solver-profile} spans dimensions
$2,4,8,16$, spectral floors down to $10^{-6}$, priors up to $0.99$, and nearly
coincident state pairs.  With the stated fixed iteration budgets and float64
evaluations of the analytic bound, the posterior-spectral receiver reaches
$10^{-6}$ nat on all 408 instances.  Multistart PVM and direct POVM ascent
reach the same target on $45.8\%$ and $4.7\%$ of the instances.
The complete protocol is in \SM{app:v3-benchmark}.

\textit{Near-optimal measurements.---}
The certificate controls the achieved information value.  We next ask whether
a small gap also constrains the measurement itself.  For faithful binary ensembles,
$\Uaff=\Iacc$, so
$\epsilon_M=\Uaff-I(\cE,M)$ is the true information regret of any POVM $M$.
Let $a_i=p_i^3\lambda_{\min}(\rho_i)$ and define
\begin{equation}
 R_\star(M)=\sum_{i,y}a_i
 \norm{g_{iy}M_y^{1/2}-M_y^{1/2}G_i^\star}_F^2.
 \label{eq:v3-residual}
\end{equation}
Each term vanishes when outcome $y$ lies in a posterior spectral block and has
the corresponding posterior ratio.  A quantitative Jensen remainder and
strong concavity give
\begin{equation}
 \boxed{\qquad R_\star(M)\leq4\epsilon_M.\qquad}
 \label{eq:v3-robust}
\end{equation}
The information gap therefore controls a concrete structural error.  More
explicitly, write $G_i^\star=\sum_zg_{iz}P_z$ and let
$\Delta_{\rm post}$ be the minimum weighted separation between distinct
posterior eigenvalue vectors.  Assign each outcome to its dominant block and
let $L(M)$ be the total weight outside the assigned blocks.  Then
\begin{equation}
 L(M)\leq\frac{8\epsilon_M}{\Delta_{\rm post}^2}.
 \label{eq:v3-wrong-block}
\end{equation}
Thus every near-optimal measurement is close, as a statistical experiment, to
a label-independent refinement of the canonical posterior PVM.  This is the
bridge from an objective-value certificate to a device-level diagnostic; see
\SM{app:v3-robust}.

\textit{Receiver-design simplification.---}
For binary finite-dimensional ensembles, our result gives an exact reduction
of the single-copy receiver-design problem.  A generic optimization may
involve up to $d^2$ nonorthogonal POVM effects and, in hardware, an
ancilla-assisted realization.  Corollary~1 reduces this search without loss of
accessible information to an orthogonal measurement on the $d$-dimensional
signal space.  For faithful states, the posterior-spectral construction
sharpens the reduction further: instead of optimizing the effects or
performing a nonconvex search over analyzer bases, one solves the concave
program in Eq.~\eqref{eq:v3-binary-program} and diagonalizes a single Hermitian
operator.  The resulting receiver requires only a programmable basis rotation
followed by fixed orthogonal readout, while Eq.~\eqref{eq:v3-certificate}
certifies the maximum information gain left to any alternative single-copy
measurement.  Moreover, in this faithful regime every optimal finite POVM is
a label-independent refinement of the posterior-spectral PVM, so outcomes
within the same posterior block may be merged without information loss.  The
result can therefore simplify receiver optimization, ancilla and detector
requirements, readout dimensionality, and calibration, provided that the task
is binary, single-copy, and optimized for mutual information.

\textit{Conclusion and outlook.---}
Information-optimal readout is governed by a task-specific posterior algebra.
Its commutativity decides whether the affine bound is attainable, and its
spectral blocks organize every optimal finite measurement.  With one generator,
binary ensembles turn Shor's conjecture into a structural consequence and
reduce receiver design to a certified concave program plus spectral readout.
Varying-support rigidity, validated numerics, and collective measurements
remain open.

\begin{acknowledgments}
AI-assisted tools (OpenAI Codex, GPT-5.6-Sol) were used for exploratory search,
literature organization, and error checking.  The authors directed these uses,
independently verified all constructions, calculations, and source claims, and
take full responsibility for the scientific content.  Kun Chen acknowledges
support from the Strategic Priority Research Program of the Chinese Academy of
Sciences under Grant No.~XDB1680102.
\end{acknowledgments}

\textit{Data and code availability.---} No experimental data were generated.
Machine-readable benchmark records, deterministic checks, analysis scripts,
and figure-generation code are included in the \texttt{anc/} directory of the
arXiv source archive.

\bibliography{references}
\clearpage
\onecolumngrid
\appendix
\setcounter{secnumdepth}{1}
\section{Affine geometry and the information upper bound}
\label{app:v3-geometry}

Throughout this section, $p_i>0$ and every $\rho_i$ is faithful.  Let
$\mathsf V=\operatorname{span}_{\mathbb R}\{\rho_i-\tau\}$ and choose a
minimal representation
\begin{equation}
 \rho_i=\tau+\sum_{a=1}^r s_{ia}A_a,
 \qquad \sum_i p_is_{ia}=0,
 \qquad G_i(\bm H)=I+\sum_as_{ia}H_a\succ0.
 \label{eq:v3-sm-affine}
\end{equation}
The feasible domain is nonempty and convex.  Since
\(\sum_i p_iG_i(\bm H)=I\), every feasible posterior satisfies
\(G_i\preceq p_i^{-1}I\).  Full column rank of the coordinate matrix bounds
\(\bm H\).  If a feasible sequence approaches the boundary, some \(G_j\)
acquires an eigenvalue tending to zero.  Faithfulness of \(\rho_j\) implies
\(\Tr\rho_j\log G_j\to-\infty\), while all other terms remain bounded above.
The maximum is therefore attained in the interior.

For $X\succ0$, $R\succ0$, and Hermitian $K$, operator concavity of the
logarithm and its integral representation give
\begin{equation}
 \frac{\lambda_{\min}(R)}{\norm X_\infty^2}\norm K_F^2
 \leq-D^2\Tr R\log X[K,K]
 \leq\frac{\norm R_\infty}{\lambda_{\min}(X)^2}\norm K_F^2.
 \label{eq:v3-sm-hessian-pair}
\end{equation}
For a tuple direction $\bm K$, put $K_i=\sum_as_{ia}K_a$.  Since
$G_i\preceq p_i^{-1}I$, Eq.~\eqref{eq:v3-sm-hessian-pair} yields
\begin{equation}
 -D^2\Phi(\bm H)[\bm K,\bm K]
 \geq\sum_i p_i^3\lambda_{\min}(\rho_i)\norm{K_i}_F^2.
 \label{eq:v3-sm-strong}
\end{equation}
Minimality of the coordinates makes the right side a norm, so \(\Phi\) is
strongly concave and its interior optimizer is unique.  Its stationarity
conditions are
\begin{align}
 \sum_i p_is_{ia}\mathcal L_{G_i^\star}(\rho_i)&=0,\nonumber\\
 \mathcal L_F(X)=D\log F[X]
 &=\int_0^\infty(F+uI)^{-1}X(F+uI)^{-1}du.
 \label{eq:v3-sm-kkt}
\end{align}

For a finite POVM without zero effects, set
\begin{align}
 c_y&=\Tr\tau M_y,
 &h_{ay}&=\frac{\Tr A_aM_y}{c_y},\nonumber\\
 H_{a,M}&=\sum_yh_{ay}M_y,
 &g_{iy}&=\frac{\Tr\rho_iM_y}{c_y}.
 \label{eq:v3-sm-embedding}
\end{align}
Then
\begin{equation}
 G_i(\bm H_M)=\sum_y g_{iy}M_y.
 \label{eq:v3-sm-induced}
\end{equation}
The map \(X\mapsto\log X\) is operator concave, hence
\begin{equation}
 \log G_i(\bm H_M)\succeq\sum_y(\log g_{iy})M_y.
 \label{eq:v3-sm-jensen}
\end{equation}
Tracing against \(p_i\rho_i\), summing over \(i\), and using the definition
of \(g_{iy}\) proves
\(I(\cE,M)\leq\Phi(\bm H_M)\leq\Uaff\).

For completeness, introduce the unrestricted value
\begin{equation}
 U_{\rm unres}=\max_{F_i\succ0,\ \sum_i p_iF_i=I}
 \sum_i p_i\Tr\rho_i\log F_i.
\end{equation}
Affine feasibility gives $\Uaff\leq U_{\rm unres}$.  For an unrestricted
feasible family, set $q_i=\Tr\tau F_i$ and $Q_i=F_i/q_i$.  Then
$\Tr\tau Q_i=1$, and the variational formula for measured relative entropy
implies
\begin{equation}
 \Tr\rho_i\log F_i
 \leq D_{\mathbb M}(\rho_i\Vert\tau)+\log q_i.
\end{equation}
Moreover, $\sum_ip_iq_i=1$, so scalar Jensen gives
$\sum_ip_i\log q_i\leq0$.  Consequently,
\begin{equation}
 \Iacc\leq\Uaff\leq U_{\rm unres}
 \leq\sum_i p_iD_{\mathbb M}(\rho_i\Vert\tau)\leq\chi(\cE).
 \label{eq:v3-sm-holevo}
\end{equation}
The variational and equality statements used here are standard
\cite{matsumoto2014maximization,hiai2021quantum,berta2017variational}.
This endpoint is not needed for the posterior-algebra exactness theorem, but it
places \(\Uaff\) between the difficult nonlinear POVM optimization and the
explicit Holevo quantity.

\section{Posterior exactness and all optimal measurements}
\label{app:v3-exactness}

We first record equality in compression Jensen.  Let
\(V:\cH\to\mathcal K\) be an isometry, \(T\succ0\), and \(G=V^*TV\).
In the decomposition
\(\operatorname{ran}V\oplus(\operatorname{ran}V)^\perp\), write
\begin{equation}
 T=\begin{pmatrix}G&B\\B^*&C\end{pmatrix}.
\end{equation}
The integral representation of the logarithm and the Schur complement yield
\begin{align}
 \log G-V^*(\log T)V
 =\int_0^\infty\!\big[&
 (G+uI-B(C+uI)^{-1}B^*)^{-1}\nonumber\\
 &-(G+uI)^{-1}\big]du\succeq0.
 \label{eq:v3-sm-compression}
\end{align}
Equality holds if and only if \(B=0\), equivalently \(TV=VG\).  For the
POVM dilation
\begin{equation}
 Vx=\bigoplus_yM_y^{1/2}x,
 \qquad T_i=\bigoplus_yg_{iy}I,
 \label{eq:v3-sm-dilation}
\end{equation}
this gives
\begin{equation}
 g_{iy}M_y^{1/2}=M_y^{1/2}G_i
 \quad\text{for every }i,y.
 \label{eq:v3-sm-intertwining}
\end{equation}

Suppose first that the optimized posteriors commute.  Let
\(P_\star=\{P_z\}_z\) be their minimal joint-spectral PVM and write
\(G_i^\star P_z=g_{iz}P_z\).  Define
\begin{equation}
 \alpha_{iz}=\frac{p_i\Tr(\rho_iP_z)}{\Tr(\tau P_z)},
 \qquad \beta_{iz}=p_ig_{iz}.
\end{equation}
Both are probability vectors for each \(z\), and a direct expansion gives
\begin{equation}
 I(\cE,P_\star)-\Phi(\bm H_\star)
 =\sum_z\Tr(\tau P_z)D(\alpha_{\cdot z}\Vert\beta_{\cdot z})\geq0.
 \label{eq:v3-sm-gap}
\end{equation}
Since \(I(\cE,P_\star)\leq\Iacc\leq\Uaff=\Phi(\bm H_\star)\), equality
holds throughout and \(P_\star\) is optimal.

Conversely, assume \(\Iacc=\Uaff\).  Finite-dimensional attainment supplies
an optimal finite POVM \(M\).  Equality in
\begin{equation}
 I(\cE,M)\leq\Phi(\bm H_M)\leq\Phi(\bm H_\star)
\end{equation}
forces \(\bm H_M=\bm H_\star\) by uniqueness.  The positive Jensen defect
has zero trace against each faithful state and is therefore zero as an
operator.  Equation~\eqref{eq:v3-sm-intertwining} holds with
\(G_i=G_i^\star\).  The support of every nonzero effect lies in a common
eigenspace of all \(G_i^\star\); POVM completeness makes these supports span
\(\cH\).  Hence \(\mathcal A_\star\) is abelian.

It remains to classify the optimizers.  Compress
Eq.~\eqref{eq:v3-sm-kkt} to a joint spectral block \(P_z\) and put
\begin{equation}
 T_z=P_z\tau P_z,
 \qquad B_{az}=P_zA_aP_z,
 \qquad C_z=\sum_i\frac{p_i}{g_{iz}}s_is_i^{\mathsf T}.
\end{equation}
The matrix \(C_z\) is positive definite.  If
\(h_z=(h_{1z},\ldots,h_{rz})^{\mathsf T}\), normalization gives
\(\sum_i(p_i/g_{iz})s_i+C_zh_z=0\), while the compressed stationarity
equations give \(C_z(B_z-h_zT_z)=0\).  Therefore
\begin{equation}
 B_{az}=h_{az}T_z,
 \qquad P_z\rho_iP_z=g_{iz}P_z\tau P_z.
 \label{eq:v3-sm-proportional}
\end{equation}
Equation~\eqref{eq:v3-sm-intertwining} shows that every optimal effect is
supported in one block.  Conversely, any block refinement satisfies
\begin{equation}
 \Tr\rho_iM_y=g_{iz}\Tr\tau M_y,
 \qquad y\in Y_z.
\end{equation}
Thus the conditional refinement channel
\begin{equation}
 R(y|z)=\frac{\Tr(\tau M_y)}{\Tr(\tau P_z)}
\end{equation}
is independent of the label \(i\), while \(y\mapsto z\) is deterministic.
The induced classical experiments are Blackwell-equivalent
\cite{kuramochi2015minimal,kuramochi2018experiments}.

\section{Robust posterior rigidity}
\label{app:v3-robust}

We now quantify the equality mechanism.  If
\(0\prec T\preceq\beta I\), the block representation used in
Eq.~\eqref{eq:v3-sm-compression} and the resolvent identity imply
\begin{equation}
 \Tr[\log G-V^*(\log T)V]
 \geq\frac{1}{2\beta^2}\norm{TV-VG}_F^2.
 \label{eq:v3-sm-compression-remainder}
\end{equation}
Indeed, writing \(X_u=G+uI\) and
\(S_u=B(C+uI)^{-1}B^*\), the trace of the integrand is bounded below by
\((\beta+u)^{-3}\norm B_F^2\).  Integration gives the stated constant and
\(\norm B_F=\norm{TV-VG}_F\).  If \(\rho\succeq\mu I\), positivity of the
defect gives the corresponding lower bound after tracing against \(\rho\).

Define the intrinsic posterior norm
\begin{equation}
 \norm{\bm K}_{\rm post}^2
 =\sum_i a_i\norm{K_i}_F^2,
 \qquad a_i=p_i^3\lambda_{\min}(\rho_i).
 \label{eq:v3-sm-post-norm}
\end{equation}
Equation~\eqref{eq:v3-sm-strong} says that \(\Phi\) is one-strongly
concave in this norm.  Split the information gap as
\begin{equation}
 \epsilon_M=\Uaff-I(\cE,M)
 =\epsilon_{\rm opt}+\epsilon_J,
 \quad
 \begin{cases}
 \epsilon_{\rm opt}=\Phi(\bm H_\star)-\Phi(\bm H_M),\\
 \epsilon_J=\Phi(\bm H_M)-I(\cE,M).
 \end{cases}
\end{equation}
Strong concavity gives
$\sum_i a_i\norm{G_i(\bm H_M)-G_i^\star}_F^2\leq2\epsilon_{\rm opt}$.
For the POVM dilation, $T_i\preceq p_i^{-1}I$ because
$g_{iy}=q(y|i)/q(y)\leq p_i^{-1}$.  Applying
Eq.~\eqref{eq:v3-sm-compression-remainder} and then tracing against
$p_i\rho_i$ gives the corresponding residual relative to
$G_i(\bm H_M)$ bounded by $2\epsilon_J$.  Minkowski's inequality in the
weighted direct sum then yields
\begin{equation}
 R_\star(M)=\sum_{i,y}a_i
 \norm{g_{iy}M_y^{1/2}-M_y^{1/2}G_i^\star}_F^2
 \leq4\epsilon_M.
 \label{eq:v3-sm-universal}
\end{equation}
No commutativity assumption enters this estimate.

When the optimized posterior algebra is abelian, write
\(G_i^\star=\sum_zg_{iz}P_z\) and define
\begin{equation}
 \Delta_{\rm post}^2=
 \min_{z\ne z'}\sum_i a_i(g_{iz}-g_{iz'})^2.
 \label{eq:v3-sm-separation}
\end{equation}
Write $g_z=(g_{1z},\ldots,g_{mz})$, $g_y=(g_{1y},\ldots,g_{my})$, and
$\norm v_a^2=\sum_i a_iv_i^2$.  For an outcome \(y\), expansion over the
spectral blocks gives
\begin{equation}
 \sum_i a_i\norm{M_y^{1/2}G_i^\star-g_{iy}M_y^{1/2}}_F^2
 =\sum_z\norm{g_z-g_y}_a^2\Tr P_zM_y.
\label{eq:v3-sm-residual-identity}
\end{equation}

\begin{lemma}[Weighted separation]\label{lem:v3-sm-variance}
Let $u_1,\ldots,u_K$ obey $\norm{u_z-u_{z'}}_a^2\geq\Delta^2$ for
$z\ne z'$, and let $w_z\geq0$, $W=\sum_zw_z$.  For every $u$,
\begin{equation}
 \sum_zw_z\norm{u_z-u}_a^2
 \geq\frac{\Delta^2}{2}\bigl(W-\max_zw_z\bigr).
 \label{eq:v3-sm-variance}
\end{equation}
\end{lemma}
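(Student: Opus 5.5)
The proof uses the weighted-variance identity for the norm $\norm\cdot_a$. The idea is to replace the arbitrary point $u$ by the weighted centroid, rewrite the weighted spread about the centroid as an average of pairwise distances, and then use the separation hypothesis pair by pair. Two features make this the right route. First, $\norm v_a^2=\sum_i a_iv_i^2$ with $a_i\ge0$ comes from a positive semidefinite bilinear form, so all parallelogram-type identities are available. Second, the equal-weight simplex example, with $u$ its centroid, gives equality in \eqref{eq:v3-sm-variance}. That equality case shows the centroid and pairwise-distance argument is sharp. Any proof that only uses "at most one $u_z$ lies within $\Delta/2$ of $u$" loses a factor of two and cannot reach the stated constant.

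\textbf{Step 1 (reduction to the centroid).} If $W=0$, both sides vanish, so assume $W>0$. Set $\bar u=W^{-1}\sum_z w_zu_z$ and write $\langle\cdot,\cdot\rangle_a$ for the form inducing $\norm\cdot_a$. Expanding $u_z-u=(u_z-\bar u)+(\bar u-u)$ gives
\begin{equation*}
\sum_z w_z\norm{u_z-u}_a^2=\sum_z w_z\norm{u_z-\bar u}_a^2+W\norm{\bar u-u}_a^2 .
\end{equation*}
The cross term vanishes because $\sum_z w_z(u_z-\bar u)=0$. Hence it suffices to treat $u=\bar u$.

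\textbf{Step 2 (pairwise identity).} The same expansion, done in both indices, gives
\begin{equation*}
\sum_{z,z'}w_zw_{z'}\norm{u_z-u_{z'}}_a^2=2W\sum_z w_z\norm{u_z-\bar u}_a^2 .
\end{equation*}
Diagonal terms vanish, and for $z\ne z'$ the hypothesis gives $\norm{u_z-u_{z'}}_a^2\ge\Delta^2$. Therefore
\begin{equation*}
\sum_z w_z\norm{u_z-\bar u}_a^2\ \ge\ \frac{\Delta^2}{2W}\Bigl(W^2-\sum_z w_z^2\Bigr).
\end{equation*}

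\textbf{Step 3 (removing the quadratic weight term).} Since $\sum_z w_z^2\le(\max_z w_z)\sum_z w_z=W\max_z w_z$, the right-hand side of Step 2 is at least $\frac{\Delta^2}{2}(W-\max_z w_z)$. Combining this with Step 1 proves \eqref{eq:v3-sm-variance} for every $u$. I do not expect a genuine obstacle. The only points that need care are checking that the cross terms cancel under the possibly degenerate weights $a_i$ (positive semidefiniteness is all that is used) and the trivial case $W=0$.
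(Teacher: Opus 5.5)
Your proposal is correct and follows the paper's proof essentially step for step: reduce to the weighted centroid $\bar u$, use the weighted variance identity to write the spread as $\frac{1}{2W}\sum_{z,z'}w_zw_{z'}\norm{u_z-u_{z'}}_a^2$, bound each off-diagonal term by $\Delta^2$, and finish with $\sum_z w_z^2\le W\max_z w_z$. Your separate handling of the case $W=0$ is a small improvement, since the paper's argument divides by $W$ without comment.
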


\begin{proof}
The left side is minimized at $\bar u=W^{-1}\sum_zw_zu_z$.  The weighted
variance identity gives
\begin{equation}
 \sum_zw_z\norm{u_z-\bar u}_a^2
 =\frac{1}{2W}\sum_{z,z'}w_zw_{z'}\norm{u_z-u_{z'}}_a^2.
\end{equation}
The off-diagonal terms are at least $\Delta^2$, and
$\sum_zw_z^2\leq W\max_zw_z$, proving
Eq.~\eqref{eq:v3-sm-variance}.
\end{proof}

Assign \(y\) to a block $z(y)$ maximizing \(\Tr P_zM_y\).  Apply
Lemma~\ref{lem:v3-sm-variance} with $w_z=\Tr P_zM_y$, sum over outcomes,
and use Eq.~\eqref{eq:v3-sm-universal}.  This proves the wrong-block mass
bound
\begin{equation}
 L(M)=\sum_y\sum_{z\ne z(y)}\Tr P_zM_y
 \leq\frac{8\epsilon_M}{\Delta_{\rm post}^2}.
 \label{eq:v3-sm-leakage}
\end{equation}
The explicit effects
\begin{equation}
 B_y=P_{z(y)}M_yP_{z(y)},
 \qquad
 D_z=P_z-\sum_{y:z(y)=z}B_y
\end{equation}
form an exact block refinement.  Trace-norm Cauchy--Schwarz gives a symmetric
finite-experiment deficiency bound
\begin{equation}
 \delta_{\rm LR}(M,\widetilde M)
 \leq\min\{1,L(M)+\sqrt{L(M)[d-L(M)]}\}.
\label{eq:v3-sm-deficiency}
\end{equation}
For completeness, put $P=P_{z(y)}$, $Q=I-P$, and
$L_y=\Tr QM_yQ$.  Then
\begin{equation}
 \norm{M_y-B_y}_1
 \leq L_y+2\sqrt{(\Tr PM_yP)L_y}.
 \label{eq:v3-sm-effect-error}
\end{equation}
Summing and using Cauchy--Schwarz once more gives
\begin{equation}
 \sum_y\norm{M_y-B_y}_1
 \leq L(M)+2\sqrt{L(M)[d-L(M)]}.
\end{equation}
To simulate $M$ from $\widetilde M$, report $y$ after outcome $B_y$ and
distribute each $D_z$ among the labels assigned to block $z$.  In the reverse
direction, map outcome $y$ of $M$ to the correspondingly named outcome of
$\widetilde M$ and never report a $D_z$ label.  For every signal state, the
two resulting total-variation errors are bounded by one half of the last
display plus $L(M)/2$.  Taking the larger directed deficiency proves
Eq.~\eqref{eq:v3-sm-deficiency} (and the trivial upper bound is one).
Thus near-optimal measurements are close in their induced statistical
experiment to refinements of the canonical posterior PVM, even though
arbitrary effect-wise or diamond-norm closeness is neither claimed nor
expected.

At a feasible numerical tuple \(\widehat{\bm H}\), let
\(R_a=\nabla_a\Phi(\widehat{\bm H})\) and
\begin{equation}
 C=\sum_i a_is_is_i^{\mathsf T},
 \qquad
 r_{\rm g}^2=\sum_{a,b}(C^{-1})_{ab}\Tr(R_aR_b).
\end{equation}
This is precisely the squared dual norm of the gradient.  One-strong
concavity implies, for $\bm K=\bm H_\star-\widehat{\bm H}$,
\begin{equation}
 \Phi(\bm H_\star)-\Phi(\widehat{\bm H})
 \leq\langle\nabla\Phi(\widehat{\bm H}),\bm K\rangle
 -\tfrac12\norm{\bm K}_{\rm post}^2.
\end{equation}
Dual Cauchy--Schwarz and maximization of the scalar quadratic give the first
bound below; the same strong-concavity inequality together with stationarity
at $\bm H_\star$ gives the second:
\begin{equation}
 \Uaff\leq\Phi(\widehat{\bm H})+\tfrac12r_{\rm g}^2,
 \qquad
 \sum_i a_i\norm{G_i(\widehat{\bm H})-G_i^\star}_F^2\leq r_{\rm g}^2.
 \label{eq:v3-sm-finite-iterate}
\end{equation}
The first inequality is the binary stopping certificate used in the Letter.
A formally computer-assisted certificate additionally requires outward-rounded
matrix functions, eigenvalue bounds, and mutual-information evaluation; the
reported benchmark uses float64 arithmetic.

\section{Binary construction and singular closure}
\label{app:v3-binary}

If one prior vanishes, or if the two states coincide, every measurement has
zero mutual information and any rank-one PVM is optimal.  We henceforth take
$0<p,q<1$ and distinct states.

For a faithful binary ensemble
\(\cE=\{(p,\rho_0),(q,\rho_1)\}\), define
\begin{align}
 \Phi_{\rm bin}(H)=&p\Tr\rho_0\log(I+qH)\nonumber\\
 &+q\Tr\rho_1\log(I-pH),
 \label{eq:v3-sm-binary-program}
\end{align}
with \(I+qH\succ0\) and \(I-pH\succ0\), and choose
\begin{equation}
 \rho_0=\tau+q(\rho_0-\rho_1),
 \qquad
 \rho_1=\tau-p(\rho_0-\rho_1).
\end{equation}
The affine program is the single-variable maximization of
Eq.~\eqref{eq:v3-sm-binary-program}.  Its optimized posteriors are
\begin{equation}
 G_0^\star=I+qH_\star,
 \qquad
 G_1^\star=I-pH_\star,
\end{equation}
so they are functions of one generator and commute.  Their common spectral
PVM is the spectral PVM of \(H_\star\), and the posterior-algebra theorem
proves that it attains \(\Iacc\).  Refining degenerate spectral projectors to
rank one cannot decrease mutual information and therefore preserves
optimality.

If one or both states are singular, restrict to
\(P_0=\supp\tau\).  Because $p,q>0$ and
$\tau=p\rho_0+q\rho_1$, positivity gives
$\supp\rho_i\subseteq P_0$ for both labels.  Put
\begin{equation}
 \rho_{i,\epsilon}=(1-\epsilon)\rho_i+\epsilon P_0/d_0,
 \qquad d_0=\operatorname{rank}P_0.
\end{equation}
The perturbed ensemble is faithful on $P_0\cH$ and remains binary.  Its optimal
spectral PVM may be rank-one refined.  An ordered rank-one PVM is a $d_0$-tuple
of mutually orthogonal rank-one projections summing to $P_0$; this set is
closed and bounded in finite dimension and hence compact.  Thus a sequence
\(\epsilon_k\downarrow0\) has a subsequence
$P^{(\epsilon_k)}\to P$.  The outcome probabilities converge jointly with
the states and PVMs, and continuity of $x\log x$ at zero makes finite-alphabet
mutual information jointly continuous.  Consequently, for every fixed finite
POVM \(M\),
\begin{equation}
 I(\cE,P)=\lim_k I(\cE_{\epsilon_k},P^{(\epsilon_k)})
 \geq\lim_k I(\cE_{\epsilon_k},M)=I(\cE,M).
\end{equation}
Davies' theorem then proves attainment for the original singular binary
ensemble.  Every ambient-space POVM compresses to $P_0\cH$ with identical
signal statistics.  Adjoining any rank-one resolution of $P_0^\perp$ extends
the limiting PVM to the original Hilbert space without changing its mutual
information.

We finally verify the prior route through Fang--Fawzi--Fawzi.  Their
finite-dimensional projective-reduction theorem states the following.  Let
$f$ be convex and lower semicontinuous with
$(0,\infty)\subset\operatorname{dom}f$.  If an interval $J$ admits a
one-to-one parametrization $\psi:J\to\operatorname{dom}f^*$ for which $\psi$
is operator concave and $f^*\!\circ\psi$ is operator convex, then the measured
$f$-divergence optimized over POVMs equals the value optimized over PVMs
\cite[Theorem~2]{fang2026uhlmann}.

If all signals in a collinear ensemble coincide, every measurement has zero
mutual information.  Otherwise choose two extreme signal states $\rho_0$ and
$\rho_1$ and write
$\rho_i=(1-\lambda_i)\rho_0+\lambda_i\rho_1$, where
$0\leq\lambda_i\leq1$, some $\lambda_i=0$, some $\lambda_i=1$, and
$\bar\lambda=\sum_ip_i\lambda_i\in(0,1)$.  The mutual information of a
measurement is the measured $f$-divergence of the two endpoint outcome
distributions for
\begin{equation}
 f(t)=\sum_i p_i[(1-\lambda_i)+\lambda_it]
 \log\frac{(1-\lambda_i)+\lambda_it}
 {(1-\bar\lambda)+\bar\lambda t}.
 \label{eq:v3-sm-f}
\end{equation}
Set
\begin{equation}
 h=\frac{t-1}{(1-\bar\lambda)+\bar\lambda t},
 \qquad
 \psi(h)=\sum_i p_i\lambda_i
 \log[1+(\lambda_i-\bar\lambda)h].
\end{equation}
On $J=(-1/(1-\bar\lambda),1/\bar\lambda)$, direct differentiation and
Fenchel conjugacy give
\begin{align}
 f'(t)&=\psi(h),\\
 f^*(\psi(h))
 &=-\sum_i p_i(1-\lambda_i)
 \log[1+(\lambda_i-\bar\lambda)h].
 \label{eq:v3-sm-conjugate}
\end{align}
The first expression is operator concave and the second is operator convex.
Furthermore,
\begin{equation}
 \psi'(h)=(1-\bar\lambda h)
 \sum_i\frac{p_i(\lambda_i-\bar\lambda)^2}
 {1+(\lambda_i-\bar\lambda)h}>0.
\end{equation}
The M\"obius map sends $t\in(0,\infty)$ bijectively onto $J$.  Since
$\psi(h)=f'(t)$ and $\psi'>0$, one-dimensional Fenchel duality identifies its
range with $\operatorname{dom}f^*$.  Indeed, the signal with $\lambda_i=1$
gives $f'(0+)=-\infty$, while the signal with $\lambda_i=0$ makes the opposite
limiting slope inadmissible because
$f(t)-t f'(\infty)\to-\infty$.  Thus no endpoint is omitted.  With the
lower-semicontinuous conventions at ratios zero and infinity, all hypotheses
of Fang--Fawzi--Fawzi's theorem hold.
Therefore the POVM and PVM suprema agree; rank-one refinement and compactness
give attainment.  This route establishes prior logical coverage of projective
sufficiency.  It does not identify the canonical posterior algebra, classify
every optimizer, or give the rigidity and finite-iterate certificates
developed here.

\section{Posterior-spectral receiver and numerical benchmarks}
\label{app:v3-benchmark}

For a faithful binary ensemble
$\cE=\{(p,\rho_0),(q,\rho_1)\}$, $q=1-p$, the affine coordinates are
$s_0=q$ and $s_1=-p$, and the posterior program is
Eq.~\eqref{eq:v3-sm-binary-program}.  Its unique optimizer generates both
posterior operators.  The spectral PVM of
$H_\star$ is therefore information-optimal.  At a feasible iterate
$\widehat H$, intrinsic strong concavity gives
\begin{equation}
 \Iacc\leq\Uaff\leq
 \Phi(\widehat H)+\tfrac12r_{\rm g}^2,
 \label{eq:v3-binary-upper}
\end{equation}
where $r_{\rm g}$ is the dual norm of the gradient in the intrinsic
posterior geometry.  The difference between this upper bound and the directly
evaluated spectral-PVM information is the stopping certificate used below.

\subsection{Implementation of the posterior-spectral receiver}

For the binary coordinates $s_0=q$ and $s_1=-p$, the intrinsic curvature
matrix is the scalar
\begin{equation}
 C=p^2q^2[p\lambda_{\min}(\rho_0)+q\lambda_{\min}(\rho_1)]>0,
\end{equation}
and the gradient of Eq.~\eqref{eq:v3-sm-binary-program} is
\begin{equation}
 R(H)=pq\,[\mathcal L_{I+qH}(\rho_0)-\mathcal L_{I-pH}(\rho_1)].
\end{equation}
Intrinsic strong concavity gives
$\Uaff-\Phi(H)\leq\Tr R(H)^2/(2C)$.  A backtracking line search preserves
$I+qH\succ0$ and $I-pH\succ0$; the implementation uses inverse BFGS, although
the certificate is independent of the proposal method.  Matrix logarithms and
Fr\'{e}chet derivatives are evaluated in an eigenbasis using stable divided
differences.  A formally computer-assisted certificate replaces floating-point
values by outward-rounded enclosures.  For singular states, one first compresses
to $\supp\tau$ and introduces a validated depolarization error.

\subsection{Algorithms and benchmark protocol}

We use a fixed-seed suite containing 408 stress instances: 24 cases at every
dimension or conditioning point.  It spans dimensions $d=2,4,8,16$,
signal-state spectral floors down to $10^{-6}$, priors up to $0.99$, and
nearly coincident state pairs
$\rho_1=(1-\delta)\rho_0+\delta\sigma$ with $\delta$ down to $10^{-2}$.

The compared methods are as follows.
\begin{enumerate}
 \item \emph{Posterior-spectral receiver:} feasibility-preserving inverse BFGS applied to
 Eq.~\eqref{eq:v3-sm-binary-program}, followed by spectral decomposition and
 Eq.~\eqref{eq:v3-binary-upper}.
 \item \emph{Direct POVM ascent:} a multistart, SOMIM-style
 mutual-information ascent over $d^2$ POVM effects, using multiplicative
 positive updates and POVM renormalization
 \cite{rehacek2005iterative}.  Starts are uniform, Helstrom, and PGM.
 \item \emph{Multistart PVM:} BFGS on $U(d)$ in local exponential coordinates,
 initialized by identity, Helstrom, and Haar-random bases.
 \item \emph{Helstrom and PGM:} the minimum-error spectral measurement and the
 square-root measurement, respectively \cite{helstrom1976quantum,hausladen1996classical}.
\end{enumerate}
The strong baselines optimize the same mutual information but have no
independent global upper bound.  In the stress suite the optimization budgets
are deliberately fixed: at most 300 posterior iterations, 100 accepted direct-
POVM iterations per deterministic start, and 30 PVM-BFGS iterations per start.
Stress-test regret therefore measures finite-resource robustness, not a
difference in the mathematical optimum.

\subsection{Solver profiles}

The main-text solver-profile figure uses all 408 stress instances.
Panel (a) is an accuracy profile: for every target regret $\epsilon$ it reports
the fraction of instances meeting the target.  Panel (b) is a time-to-accuracy
profile at $\epsilon=10^{-6}$ nat.  The posterior-spectral receiver reaches the target on
all 408 cases, with median successful time $12.9$ ms.  Multistart PVM and
direct POVM ascent reach the target on $45.8\%$ and $4.7\%$ of cases;
Helstrom and PGM reach it on $0.25\%$ and $0\%$.  The black curve uses the
affine upper-minus-PVM gap, whereas the remaining curves use empirical regret
relative to the posterior-spectral PVM.

All numerical certificate values are float64 evaluations of exact-arithmetic
bounds, not outward-rounded interval enclosures.  Timings are implementation-
and machine-dependent.  Complete CSV files, fixed seeds, and plotting scripts
are included in the \texttt{anc/} directory of the arXiv source archive.

\section{Scope, prior work, and limitations}
\label{app:v3-scope}

The existence of a projective optimum for finite collinear ensembles is also a
specialization of the measured-$f$ projective-reduction theorem of Fang,
Fawzi, and Fawzi \cite{fang2026uhlmann}.  The posterior-algebra construction is
a proof and adds the canonical measurement, complete optimizer
classification, robust block rigidity, and certified finite-iterate design.
Classical minimal sufficiency and postprocessing equivalence motivate the
Blackwell interpretation \cite{kuramochi2015minimal,kuramochi2018experiments}.

The faithful-state rigidity constants deteriorate near singular boundaries.
Existence of the binary optimum extends to arbitrary singular states by support
compression, depolarization, and compactness, but a uniform ambient
Frobenius-rigidity theorem does not.  States faithful on a common support are
covered by restricting every operator to that support.  Varying-support pure
states require a visibility-weighted metric.  The benchmark does not prove
polynomial bit complexity or asymptotic scaling exponents; it demonstrates the
finite-resource consequence of the one-generator reduction.

\end{document}